\theoremstyle{plain}
\newtheorem{theorem}{Theorem}
\newtheorem{lemma}{Lemma}
\theoremstyle{remark}
\theoremstyle{definition}
\newif\ifnotes
\newcommand{\snote}[1]{\ifnotes{{#1}}\fi}
\newcommand{\christina}[1]{\ifnotes{{#1}}\fi}
\newcommand{\pavlos}[1]{\ifnotes{{#1}}\fi}
\renewcommand{\Pr}{\mbox{${\mathbb P}$} }
\newcommand{\testmatrix}{\mathbf{G}}
\newcommand{\znoiseProb}{\mathit{z}}
\newcommand{\Prob}{\mbox{${\mathbb P}$} }
\newcommand{\graph}{\mathit{G}}
\newcommand{\vertexSet}{\mathcal{V}}
\newcommand{\edgeSet}{\mathcal{E}}
\newcommand{\edgeSubset}{\mathcal{I}}
\newcommand{\vertexIndex}{\mathit{v}}
\newcommand{\edgeIndex}{\mathit{e}}
\newcommand{\component}{\mathit{C}}
\newcommand{\disjointSetIdx}{\mathit{d}}
\newcommand{\disjointSetIdxSecond}{\mathit{b}}
\newcommand{\disjointSets}{\mathit{D_{\component}}}
\newcommand{\disjointOuterSets}{\mathit{D_{\component,out}}}
\newcommand{\disjointInnerSets}{\mathit{D_{\component,in}}}
\newcommand{\setDef}{\mathit{K}}
\newcommand{\nItems}{\mathit{n}}
\newcommand{\nFamilies}{\mathit{F}}
\newcommand{\familyIndex}{\edgeIndex}
\newcommand{\familyInfectRate}{\mathit{q}}
\newcommand{\nDef}{\mathit{k}}
\newcommand{\sparseRegimePar}{\mathit{\alpha}}
\newcommand{\nDefFamilies}{\mathit{k_f}}
\newcommand{\sparseRegimeFamilyPar}{\mathit{\alpha_f}}
\newcommand{\nMembersSymmetric}{\mathit{M}}
\newcommand{\nDefMembersSymmetric}{\mathit{k_m}}
\newcommand{\memberInfectRateSymmetric}{\mathit{p}}
\newcommand{\memberInfectProb}{\memberInfectRateSymmetric_\memberIndex}
\newcommand{\nDefMembers}{\mathit{k_m^\familyIndex}}
\newcommand{\nDefMembersOverlap}{\mathit{k_m^\edgeSubset}}
\newcommand{\nDefMembersDisjointSet}{k_m^\disjointSetIdx}
\newcommand{\memberInfectRate}{\mathit{p_\familyIndex}}
\newcommand{\estimInfectRateDisjoint}{\mathit{\hat{p}_\disjointSetIdx}}
\newcommand{\estimInfectRateDisjointSecond}{\mathit{\hat{p}_\disjointSetIdxSecond}}
\newcommand{\thresh}{\mathit{\theta}}
\newcommand{\sampleSet}{\mathit{r}}
\newcommand{\mixedSample}{\mathit{z}}
\newcommand{\createSample}{\mathit{SelectRepresentatives}}
\newcommand{\adapt}{\mathit{AdaptiveTest}}
\newcommand{\nTests}{\mathit{T}}
\newcommand{\testResult}{\mathit{Y}}
\newcommand{\testIndex}{\mathit{\tau}}
\newcommand{\memberIndex}{\vertexIndex}
\newcommand{\defVector}{\mathbf{U}}
\newcommand{\defVariable}{\mathit{U}}
\newcommand{\defectValue}{\mathit{u}}
\newcommand{\defFamilyVariable}{\mathit{X}}
\newcommand{\defFamilyVariables}{\mathbf{X}}
\newcommand{\defFamilyValues}{\mathbf{x}}
\newcommand{\defFamilyValuesDict}{\mathbf{\mathcal{X}}}
\newcommand{\binEntropy}{\mathit{h_2}}
\newcommand{\entropy}{\mathit{H}}
\newcommand{\Ber}{\text{Ber}}
\newcommand{\boundRegime}{38\%}
\newcommand{\nOnesPerRow}{\mathit{c}}
\newcommand{\nBlocks}{\mathit{b}}
\newcommand{\blockInfectRate}{\mathit{w}}
\newcommand\blfootnote[1]{%
	\begingroup
	\renewcommand\thefootnote{}\footnote{#1}%
	\addtocounter{footnote}{-1}%
	\endgroup
}
\begin{document}
\title{\bf Group testing for overlapping communities}
\author{ \IEEEauthorblockN{
Pavlos Nikolopoulos\IEEEauthorrefmark{1},
Sundara Rajan Srinivasavaradhan\IEEEauthorrefmark{2},
Tao Guo\IEEEauthorrefmark{2}, 
Christina Fragouli\IEEEauthorrefmark{2}, 
Suhas Diggavi\IEEEauthorrefmark{2}} \IEEEauthorblockA{\IEEEauthorrefmark{1}École polytechnique fédérale de Lausanne (EPFL), Switzerland} \IEEEauthorblockA{\IEEEauthorrefmark{2}University of California Los Angeles (UCLA), USA}}\maketitle

\begin{abstract}
In this paper, we propose algorithms that leverage a known community structure to make group testing more efficient.
We consider a population organized in connected communities: each individual participates in one or more communities, and the infection probability of each individual depends on the communities (s)he participates in. Use cases include students who  participate in several classes, and workers who share common spaces.
Group testing reduces the number of tests needed to identify the infected individuals by pooling diagnostic samples and testing them together.
We show that making testing algorithms aware of the community structure,  can significantly reduce the number of tests needed both for adaptive and non-adaptive group testing. \blfootnote{This work was supported in part by NSF grants \# 2007714, 1705077 and UC-NL grant LFR-18-548554.}
\end{abstract}


\section{Introduction}

A significant challenge to re-open society in the
presence of a pandemic is the need for large scale, reliable testing. Group testing is a strategy that can help reduce the number of tests and increase reliability by pooling together diagnostic samples. Accordingly, it is attracting significant attention: several countries  (India, Germany, US, China) are currently deploying group testing strategies to help them quickly and reliably identify infected people \cite{GroupTest-implement1,GroupTest-implement2-FDA}.

In this paper, we build group testing algorithms around an idea ``whose time has come": we propose to leverage a known community structure to make group testing more efficient. Although traditionally the work in group testing assumes ``independent" infections, we note that today it is totally feasible to keep track of community structure - several apps are already doing so \cite{TrackingCommunity-1,TrackingCommunity-Google}. Moreover, our approach is well aligned with the need for  independent grassroots testing  (schools testing their students, companies their workers) where the community structure is explicit (shared classrooms, shared common spaces). 


We find that taking into account the community structure
can reduce the number of tests we need significantly below the well known combinatorial bound \cite{GroupTestingMonograph}, the best we can hope for when not taking this structure into account.
Moreover, it enlarges the regime where group testing can offer benefits over individual testing. Indeed,
a limitation of group testing is that it does not offer benefits when \pavlos{the number of infected people} grows linearly with the size of the population considered~\cite{LinBndIndv2,LinBndIndv-3-Ungar-1960}. 
Taking into account the community structure allows to identify and remove from the population large groups of infected members, thus reducing their proportion and converting  a linear to a sparse regime identification. 
However, we also find that, although our algorithms do not require exact knowledge of the infection probabilities, they do need to correctly know the community structure, and in particular, the community overlaps:  not taking into account the overlaps (assuming communities are disconnected) can deteriorate the performance.
Our main contributions include:\\
$\bullet$ We derive a lower bound on the number of tests, that generalizes the counting bound~\cite{GroupTestingMonograph} to overlapping communities.\\ 
$\bullet$ We propose a new adaptive algorithm that requires fewer tests than traditional adaptive algorithms to recover the infection status of all individuals without error.\\
$\bullet$ We propose two nonadaptive algorithms that leverage the community structure to improve reliability over traditional nonadaptive testing. 
One leverages the structure at the encoder side with a novel test design, 
while the other one accounts for it at the decoder side with a new decoding algorithm that is based on loopy belief propagation (LBP) and is generic enough to work on any structure.

The paper is organized as follows: 
we give known results in~\Cref{sec:back}, 
our model in~\Cref{sec:notation}, 
the lower bound in \Cref{sec:lower-bounds}, and non-adaptive algorithms  in \Cref{section-algorithm,sec:LBP}. 
Our numerical evaluation is in \Cref{section-experiments}.


\section{Background}
\label{sec:back}
\pavlos{
In group testing, a test $\testIndex$ takes as input samples from $n_\testIndex$ individuals, pools them together and  outputs a single value:  positive if any one of the samples is infected, and negative if no one is.  
More precisely, let ${\defVariable_i=1}$  when individual $i$ is infected and $0$ otherwise. 
$\testResult_\testIndex$ takes a binary value calculated as $\testResult_\testIndex= \bigvee_{i\in \delta_{\testIndex}} \defVariable_i$, 
where $\bigvee$ stands for the \texttt{OR} operator (disjunction) and $\delta_{\testIndex}$ is the group of people participating in the test.

Traditional group testing assumes
a population of $\nItems$ members out of which some are infected. 
Two infection models are considered: (i) in the {\em combinatorial model}, there is a fixed number of infected members $\nDef$, selected uniformly at random among all sets of size $\nDef$; (ii) in the {\em probabilistic model}, each item is infected independently of all others with probability $\memberInfectRateSymmetric$, so that the expected number of infected members is $\bar{\nDef} = \nItems\memberInfectRateSymmetric$. 

Several more detailed setups and various testing algorithms have been explored in the literature. These include:

\noindent$\bullet$
{\em Sparse vs. linear scaling regime}: assume $\nDef=\Theta(\nItems^\alpha)$, we say we operate in the linear regime if $\alpha=1$; in the sparse regime if $0\leq \alpha < 1$; in the very sparse regime if $k$ is constant.

\noindent$\bullet$
{\em Adaptive vs. non-adaptive testing}: In adaptive testing, we use the outcome of previous tests to decide what tests to perform next. An example of adaptive testing is {\em binary splitting}, which implements a form of binary search and is known to be optimal when the number of infected members is unknown.
Non-adaptive testing constructs, in advance, a \emph{test matrix} $\testmatrix\in \{0,1\}^{\nTests\times \nItems}$  where each row corresponds to one test, each column to one member, and the non-zero elements in each row determine the set $\delta_{\tau}$.  
Although adaptive testing generally requires fewer tests than non-adaptive, non-adaptive testing is more practical \christina{when} all tests can be executed in parallel.


Inhere, we reuse the following well-established results (see \cite{GroupTestingMonograph,CntBnd,GroupTestingBook} and references therein):

\noindent$\bullet$
In the combinatorial model, since $\nTests$ tests allow to distinguish among $2^\nTests$ combinations of test outputs, 
we need 
$\nTests \geq \log_2{\binom{\nItems}{\nDef}}$ to identify $\nDef$ randomly infected people out of $\nItems$.
This is known as the {\bf counting  bound}~\cite{GroupTestingMonograph,CntBnd,GroupTestingBook} and implies that we cannot use less than  $\nTests=\mathcal{O}(\nDef\log \frac{\nItems}{\nDef})$ tests. 
In the probabilistic model, a similar bound has been derived for the number of tests needed on average: $\nTests \geq \nItems \binEntropy\left(\memberInfectRateSymmetric\right)$,
where $\binEntropy$ is the binary entropy function.
}

\noindent$\bullet$ 
Noiseless adaptive testing allows to achieve the counting bound for $\nDef=\Theta(\nItems^\alpha)$ and $0\leq \alpha < 1$; 
for non-adaptive testing, this is also true of $0\leq \alpha <  0.409$, 
if we allow a vanishing (with $\nItems$) error~\cite{GroupTestingMonograph,coja-oghlan19,coja-oghlan20a}.

\noindent$\bullet$ 
In the linear regime ($\alpha=1$), group testing offers little benefits over individual testing.
In particular, if the infection rate $\sfrac{\nDef}{\nItems}$ is more than $\boundRegime$,
group testing does not use fewer tests than one-to-one (individual) testing
unless high identification-error rates are acceptable~\cite{LinBndIndv2,LinBndIndv-3-Ungar-1960,LinBndIndv1}.

\section{Model and Notation}
\label{sec:notation}
\subsection{\pavlos{Community model}}
\label{sec:notation:models}

Our work extends the above results by assuming an overlapping community structure: 
members may belong to one or more communities---hence they are infected according to new combinatorial and probabilistic models 
that are slightly different from the traditional ones and depend on how the communities overlap (\Cref{sec:infect-model}). 
Given these new infection models, new lower bounds can be derived (\Cref{sec:lower-bounds})
for both the combinatorial and probabilistic case. 
Our analysis shows that these bounds can be significantly lower than the above-mentioned counting bounds. 

More formally, we assume that all members of the entire population $\vertexSet = \{1,2,\cdots,\nItems\}$ are organized in a known structure,
which can be perceived in the form of a hypergraph $\graph(\vertexSet,\edgeSet)$: 
each vertex $\vertexIndex \in \vertexSet$ 
corresponds to an individual, that we simply call a \textit{member}, 
and each edge $\edgeIndex \in \edgeSet$ 
indicates which members belong to the same \textit{community}. 
Since $\graph$ is a hypergraph, 
an edge may connect any number of vertices; 
hence, a member may belong to one or more communities.
The number of communities that a member belongs to is called the \textit{degree} of the member. 
There exist $\nFamilies$ communities in total.

The hypergraph $\graph$ may be decomposed into connected components, where
each component $\component (\vertexSet_\component, \edgeSet_{\component})$ is a sub-hypergraph.
For component $\component$ let $\disjointSets$ 
contain the nonempty subsets of 
the standard partition of the hyperedges in $\edgeSet_\component$; 
in the example of Fig.~\ref{fig:outersets}, $\edgeSet_\component$ consists of $3$ hyperedges, and  $\disjointSets$ contains $7$ disjoint sets.
\begin{wrapfigure}{r}{0.45\linewidth}	
	\vspace{-10pt}
	\begin{center}	
		\includegraphics[scale=0.15]{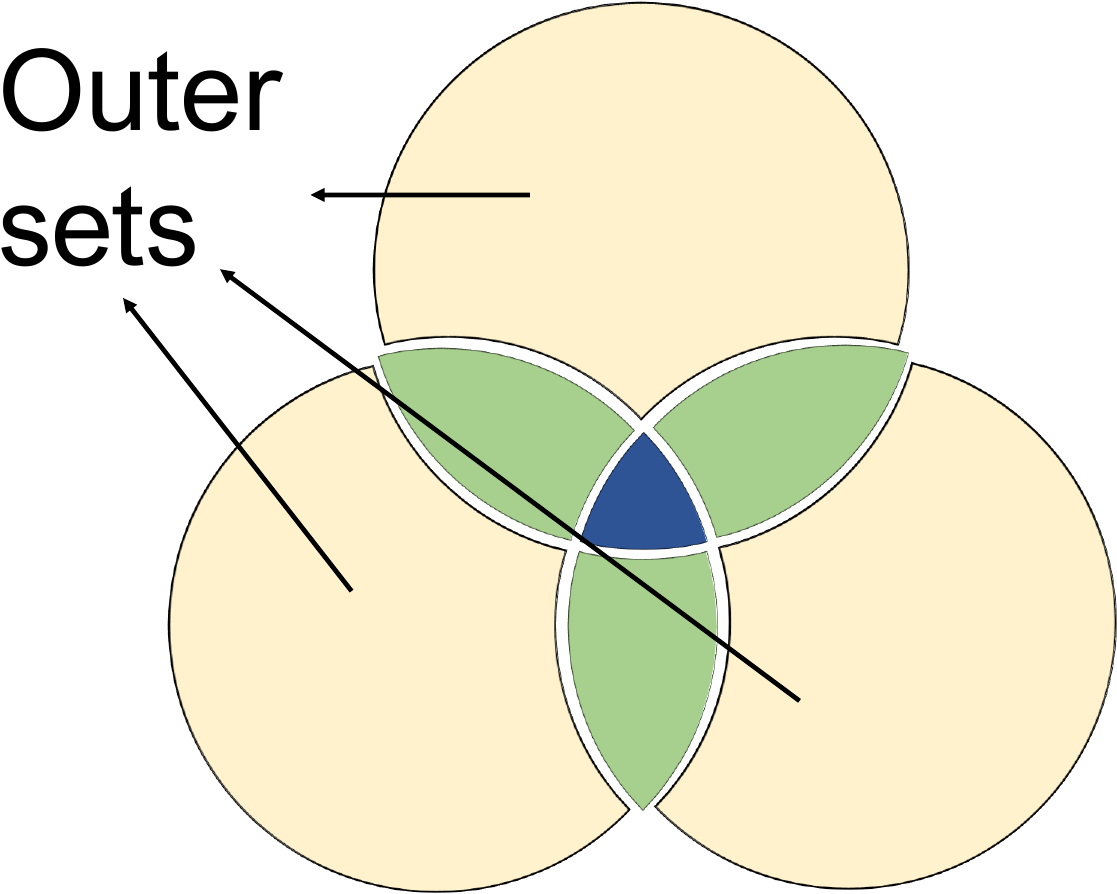}
		\caption{Standard partition.}
		\label{fig:outersets}
	\end{center}
	\vspace{-10pt}
\end{wrapfigure}
For each set $\disjointSetIdx \in \disjointSets$,  let $\vertexSet_\disjointSetIdx$ denote the set of members it contains.
Because of the partition, all members of $\vertexSet_\disjointSetIdx$ belong to the same community or set of communities, 
which we denote with $\edgeSet_{\vertexSet_\disjointSetIdx}$---hence, they all have the same degree.
As described in the next section, these members  get infected according to some common infection principle.
We distinguish 2 kinds of sets in $\disjointSets$:
(a) the ``outer'' sets: $\disjointOuterSets \triangleq \left\{\disjointSetIdx \in \disjointSets : \nexists \disjointSetIdxSecond \in \disjointSets \text{ s.t. } \edgeSet_{\vertexSet_\disjointSetIdxSecond} \subset \edgeSet_{\vertexSet_\disjointSetIdx} \right\}$, 
and (b) the ``inner'' sets: $\disjointInnerSets \triangleq \disjointSets \setminus \disjointOuterSets$.  
Fig.~\ref{fig:outersets} illustrates the 3 outer (yellow) and 4 inner (green, blue) sets.
Note that the members of inner sets have always a higher degree than those of the outer sets.

\subsection{Infection models}\label{sec:infect-model} 
We consider the following infection models, that parallel the ones in the traditional setup of~\Cref{sec:back}.\\
$\bullet$ {\bf Combinatorial Model (I).}  $\nDefFamilies$ of the communities have at least one infected member (we will call these {\em infected communities}). 
The rest of the communities have no infected members. 
Any combination of infected communities has the same chance of occurring. 
In each infected community, there are $\nDefMembers$
infected members, 
out of which $\nDefMembersOverlap$ 
are shared with a subset of infected communities $\edgeSubset \subseteq \edgeSet$.
The infected communities (resp. infected community members) are chosen uniformly at random out of all communities (resp. members that belong to the same communities).\\
$\bullet$ {\bf Probabilistic Model (II).} A community $\edgeIndex$ becomes infected i.i.d. with probability $\familyInfectRate$. 
If a member $\vertexIndex$ of an infected community $\edgeIndex$ belongs \textit{only} to that community (i.e. has degree $1$), then it becomes infected w.p.~$\memberInfectProb = \memberInfectRate$, independently from the other members (and other communities).
Also, if $\vertexIndex$ belongs to a subset of infected communities $\edgeSubset \subseteq \edgeSet$, 
it is considered to be infected by either of these communities.
So, given all the infection rates of these communities $\left\{\memberInfectRate: \edgeIndex \in \edgeSubset\right\}$,
we say that $\memberIndex$ becomes infected w.p.:
$\memberInfectProb
= 1 - \prod_{\familyIndex \in \edgeSubset}(1-\memberInfectRate)$.
Note that since each member gets infected by either of the infected communities it belongs to, the product of the RHS term above becomes smaller as $|\edgeSubset|$ increases.
Last, if $\vertexIndex$ does not belong to an infected community, then $\memberInfectProb = 0$.

We make two remarks: 
First, although the communities are infected independently, their structure causes a dependent infection model;
in fact, the way communities overlap determines the infection probability of their shared members. 
Second, our model captures situations where infection is determined by participation in a community rather than the status of community members.  
Albeit simplistic, we think that this model can be useful in real pandemics.
Since the exact community structure of the entire population of a country or a continent can never be known to the test designer, 
we expect that graph $\graph$ only partially describes the reality: 
there might be members that do not belong to $\vertexSet$, yet interact with them in unknown ways, 
or there might be communities that are simply not captured due to unknown member interactions.
In such a case, assuming that communities become infected independently seems a simple yet reasonable model to use. 
However, once a few communities in $\graph$ get infected, 
we expect that the infection probability of a member will increase with the number of infected communities it belongs to, 
which is captured by our model in the computation of $\memberInfectProb$.

\subsubsection*{Non-overlapping communities}
A special case of our community framework is when the communities have no overlap; this scenario is investigated in our prior work~\cite{GroupTesting-community,aistats-paper}, 
where algorithms that take into account the non-overlapping structure are explored.
In our experiments, however, these algorithms do not always perform well when communities overlap (e.g. see Fig.~\ref{fig:adaptive}); 
in fact, there may be cases where they perform worse than traditional group testing.
\pavlos{
Also, the idea of using side-information from a community structure in decoding of group tests has been identified in~\cite{zhu2020noisy,goenka2020contact}, independently from this work. 
That work is complementary to ours;
we focus more on test designs rather than decoding, for which we use well-known algorithms such as COMP and LBP.}

\section{Lower bound on the number of tests}
\label{sec:lower-bounds}
We  compute the minimum number of tests needed to identify all infected members under the zero-error criterion in both community models (I) and (II). All proofs can be found in the Appendix.
\begin{theorem}[Community bound for combinatorial model (I)]
	\label{thm:combinatorialBound} 
	Any algorithm that identifies all $\nDef$ infected members without error requires a number of tests $\nTests$ satisfying:
	\begin{equation}
	\nTests \ge \log_2{\binom{\nFamilies}{\nDefFamilies}} + \sum_{\component \in \graph} \sum_{\disjointSetIdx \in \disjointSets} \log_2{\binom{|\vertexSet_\disjointSetIdx|}{\nDefMembersDisjointSet}},
	\label{eq:combinatorialBound}
	\end{equation}
\end{theorem}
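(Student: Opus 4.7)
The plan is to apply a counting argument in the spirit of the classical counting bound in Section~\ref{sec:back}, generalized to the overlapping community structure of Model~(I). With $\nTests$ binary-valued group tests, the outcome vector $(\testResult_1,\ldots,\testResult_\nTests)$ can take at most $2^{\nTests}$ distinct values, so any algorithm that identifies every infection pattern without error must be able to distinguish all distinct configurations permitted by the model. Hence it suffices to count the number $N$ of such valid configurations and conclude from $2^{\nTests} \ge N$ that $\nTests \ge \log_2 N$.

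First I would handle the community level. By the combinatorial assumption, the set of $\nDefFamilies$ infected communities is selected uniformly from all size-$\nDefFamilies$ subsets of $\edgeSet$, yielding $\binom{\nFamilies}{\nDefFamilies}$ equally valid possibilities that must be distinguishable from one another through the test outcomes.

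Next, conditional on the choice of infected communities, I would count the admissible placements of infected members within the population. Here the disjoint-set decomposition plays the central role: the sets $\vertexSet_\disjointSetIdx$, ranging over all $\disjointSetIdx \in \disjointSets$ and all components $\component \in \graph$, partition the entire vertex set $\vertexSet$, and every member of a given $\vertexSet_\disjointSetIdx$ shares exactly the same community membership $\memToFam(\vertexSet_\disjointSetIdx)$. Consequently, the constraints $\nDefMembers$ and $\nDefMembersOverlap$ imposed by Model~(I), when translated via Möbius inversion on the community lattice, reduce to fixing the single scalar $|\setDef(\vertexSet_\disjointSetIdx)|$ in each disjoint set, and choices across distinct disjoint sets become independent. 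Thus each $\disjointSetIdx$ contributes exactly $\binom{|\vertexSet_\disjointSetIdx|}{|\setDef(\vertexSet_\disjointSetIdx)|}$ valid placements, and multiplying over all disjoint sets and all components yields
\begin{equation*}
N \;=\; \binom{\nFamilies}{\nDefFamilies}\prod_{\component \in \graph}\prod_{\disjointSetIdx \in \disjointSets}\binom{|\vertexSet_\disjointSetIdx|}{|\setDef(\vertexSet_\disjointSetIdx)|}.
\end{equation*}
Taking $\log_2$ of $2^{\nTests}\ge N$ then produces the stated inequality.

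The main obstacle I foresee is the bookkeeping step in the previous paragraph: one must carefully argue that the per-disjoint-set placements are genuinely independent, i.e.\ that for every infected community $\edgeIndex$ the prescribed overlap cardinalities $\{\nDefMembersOverlap : \edgeSubset \ni \edgeIndex\}$ are consistent with a unique vector of disjoint-set cardinalities $\{|\setDef(\vertexSet_\disjointSetIdx)|\}_{\disjointSetIdx}$, so that the product factorization across $\disjointSetIdx$ is exact and neither over- nor under-counts configurations. Once this consistency is verified via inclusion-exclusion on the lattice of community intersections (noting the distinction between outer sets $\disjointOuterSets$ and inner sets $\disjointInnerSets$ introduced in Section~\ref{sec:notation:models}), the bound follows immediately from the binary-outcome counting argument above.
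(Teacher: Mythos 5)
Your proposal is correct and follows essentially the same route as the paper: the standard $2^{\nTests}\ge N$ counting argument, with $N$ factored as $\binom{\nFamilies}{\nDefFamilies}$ choices of infected communities times a product of $\binom{|\vertexSet_\disjointSetIdx|}{|\setDef(\vertexSet_\disjointSetIdx)|}$ over the disjoint sets of the standard partition. Your extra care about why the overlap constraints factor independently across disjoint sets (and the remark that empty sets contribute a factor of $1$, letting the product range over all sets) is exactly the bookkeeping the paper's proof asserts without elaboration.
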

\noindent \pavlos{where $|\vertexSet_\disjointSetIdx|$ (resp. $\nDefMembersDisjointSet$) is the number of members (resp. infected members) of each disjoint set $\disjointSetIdx$ in $\disjointSets$}.

\subsubsection*{Observation} 
Consider a usual epidemic scenario, 
where the population is composed of a large number of communities with members that have close contacts (e.g. relatives, work colleagues, students who attend the same classes, etc.). 
In such a case, one should expect that most all members of each infect community are infected, 
even though the number of infected communities $\nDefFamilies$ and the overall number of infected members $\nDef$ may still follow a sparse regime 
(i.e., $\nDefFamilies = \Theta(\nFamilies^\sparseRegimeFamilyPar)$ and $\nDef = \Theta(\nItems^\sparseRegimePar)$ for $\sparseRegimeFamilyPar, \sparseRegimePar \in [0,1)$).
\Cref{thm:combinatorialBound} shows the significant benefit of taking the community structure into account in the test design:
the community bound increases almost \textit{linearly} with $\nDefFamilies$, 
as opposed to $\nDef$, which is what happens with the traditional counting bound (that does not account for any community structure).
This is due to the second term of~\Cref{eq:combinatorialBound} tending to $0$ and
$
\log_2 {\binom{\nFamilies}{\nDefFamilies}} \sim
\nDefFamilies \log_2 {\frac{\nFamilies}{\nDefFamilies}} \sim (1-\sparseRegimeFamilyPar) \nDefFamilies \log_2 {\nFamilies} \label{eq:asymptoticRelation1}
$.

\snote{\begin{theorem}[Community bound for probabilistic model (II)]
	\label{thm:probabilisticBound}
	Any algorithm that identifies all infected members without error requires a number of tests $\nTests$ satisfying:
	\begin{align}
	    &T \geq  \nFamilies \binEntropy(\familyInfectRate) + \sum_{\vertexIndex=1}^n \sum_{\edgeSubset \subseteq \edgeSet_\vertexIndex} \familyInfectRate^{|\edgeSubset|}(1-\familyInfectRate)^{|\edgeSet_\vertexIndex|-|\edgeSubset|}  \binEntropy(\prod_{\familyIndex \in \edgeSubset}(1-\memberInfectRate)) \nonumber\\
	    &- \sum_{\familyIndex=1}^\nFamilies (1-\familyInfectRate+\familyInfectRate(1-\memberInfectRate)^{|S_e|}) \binEntropy\left (\frac{1-\familyInfectRate}{1-\familyInfectRate+\familyInfectRate(1-\memberInfectRate)^{|S_e|}}\right ),
	    \label{eq:probabilisticBound}
	\end{align}
where $\edgeSet_\vertexIndex$ is the set of communities that member $\vertexIndex$ belongs to, $\edgeSubset$ is the subset of infected communities in $\edgeSet_\vertexIndex$, $S_e$ is the set of members who \textit{only} belong to community $\familyIndex$.
\end{theorem}

Theorem~\ref{thm:probabilisticBound} extends from zero-error recovery to constant-probability recovery,
if we apply Fano’s inequality in a similar way to Thm 1 of~\cite{prior}---in that case the right-hand side of \eqref{thm:probabilisticBound} is multiplied by the desired probability of correctly identifying all infected members.}

\section{Algorithms}\label{section-algorithm}
In this section, we provide group-testing algorithms for the noiseless case that leverage the community structure.
We start from adaptive algorithms
and then proceed to  non-adaptive.

\subsection{Adaptive algorithm}
\label{sec:adapt}

\begin{algorithm}[b!]
\caption{Adaptive Community Testing ($\graph\left(\vertexSet,\edgeSet\right)$)}
\label{algorithm-noiseless} 
$\hat{\defVariable}_\memberIndex$ is the estimated infection state of member $\memberIndex$ (``+'' or ``-'').\\
$\hat{\defVariable}_{\mixedSample}$ is the estimated infection state of a mixed sample $\mixedSample$.
\begin{algorithmic}[1]
	\For {$\disjointSetIdx \in \disjointOuterSets$, $\forall \component \in \graph$}
		\State{$\sampleSet_\disjointSetIdx \leftarrow \createSample\left( \vertexSet_\disjointSetIdx\right)$}
	\EndFor
	\State{$\left\{ \hat{\defVariable}_{\mixedSample(\sampleSet_\disjointSetIdx)}  \right\}   \leftarrow \adapt \left( \left\{\mixedSample(\sampleSet_\disjointSetIdx \right\} \right) $}
	\State{Set $A:=\emptyset$}
	\For {$\component \in \graph$}
		\For {$\disjointSetIdx \in \disjointOuterSets$}
			\If {$\hat{\defVariable}_{\mixedSample(\sampleSet_\disjointSetIdx)} = $ ``positive''}
			\State{Individually test $\vertexSet_\disjointSetIdx$: $\hat{\defVariable}_\memberIndex \leftarrow \defVariable_\memberIndex$, $\forall \memberIndex \in \vertexSet_\disjointSetIdx$.}
			\State{$\estimInfectRateDisjoint \leftarrow \sfrac{1}{|\vertexSet_\disjointSetIdx|} \cdot \sum_{\vertexIndex \in \vertexSet_\disjointSetIdx} \bf{1}_{ \left\{\hat{\defVariable}_\memberIndex = \text{ 'positive'} \right\} }$}
			\Else
			\State{$A \leftarrow A \cup \left\{\memberIndex : \memberIndex \in \vertexSet_\disjointSetIdx \right\}$}
			\EndIf
		\EndFor
		\For {$\disjointSetIdxSecond \in \disjointInnerSets$ (in increasing order of degree)} 
		\If 
		{$\exists \disjointSetIdx \in \disjointSets$ s.t.  $\edgeSet_{\vertexSet_\disjointSetIdx} \subset \edgeSet_{\vertexSet_\disjointSetIdxSecond}$ \& $\estimInfectRateDisjoint > \thresh$} 
		\State{Individually test $\vertexSet_\disjointSetIdxSecond$: $\hat{\defVariable}_\memberIndex \leftarrow \defVariable_\memberIndex$, $\forall \memberIndex \in \vertexSet_\disjointSetIdxSecond$.}
		\State{$\estimInfectRateDisjointSecond \leftarrow \sfrac{1}{|\vertexSet_\disjointSetIdxSecond|} \cdot \sum_{\vertexIndex \in \vertexSet_\disjointSetIdxSecond} \bf{1}_{ \left\{\hat{\defVariable}_\memberIndex = \text{ 'positive'} \right\} }$}
		\Else
		\State{$A \leftarrow A \cup \left\{\memberIndex : \memberIndex \in \vertexSet_\disjointSetIdxSecond \right\}$}
		\EndIf
		\EndFor
	\EndFor
	\State{$\left\{\hat{\defVariable}_{\memberIndex}: \memberIndex \in A \right\}  = \adapt \left(A\right)$}\\
	\Return $\left[\hat{\defVariable}_1,\ldots,\hat{\defVariable}_\nItems\right]$
\end{algorithmic}
\end{algorithm}

\Cref{algorithm-noiseless} describes our adaptive algorithm. 
It is built on top of traditional adaptive testing,which we will generally denote as $\adapt()$.
$\adapt()$ is an abstraction of any existing (or future) adaptive algorithm that assumes independent infections. 
We distinguish $2$ different inputs for $\adapt()$:
(a) a set of members; or
(b) a set of \textit{mixed samples}. 
A mixed sample is created by pooling together samples from multiple members. 
For example, mixed sample $\mixedSample(\sampleSet_\disjointSetIdx)$ is an pooled sample of some representative members $\sampleSet_\disjointSetIdx$ from disjoint set $\disjointSetIdx$. 
Because we only care whether a mixed sample is positive or not, 
we can treat it in the same way as an individual sample---hence use group testing to identify the state of mixed samples as we do for individuals.

\noindent\textbf{Part 1:}
For each component of the graph $\graph$, we first identify the outer sets $\disjointOuterSets$.
Then, from each outer set $\disjointSetIdx$, we select a representative subset of members $\sampleSet_\disjointSetIdx$, whose samples are pooled together into a mixed sample $\mixedSample(\sampleSet_\disjointSetIdx)$. 
There can be many selection methods for $\createSample()$; however, we typically use uniform (random) sampling without replacement.
Finally, we determine the state of all mixed samples (line 4). 

\noindent\textbf{Part 2:}
We treat $\hat{\defVariable}_{\mixedSample(\sampleSet_\disjointSetIdx)}$ as a rough estimate of the infection regime inside each set $\disjointSetIdx$:
if $\hat{\defVariable}_{\mixedSample(\sampleSet_\disjointSetIdx)}$ is positive, we consider $\disjointSetIdx$ to be heavily infected and we individually test its members (line 9);
otherwise, we consider it lightly infected and we include its members in set $A$ (line 12).  
For our rough estimate of the infection regime to be a good one, we choose the number of representatives based on some prior information about infection rate of each outer set; for example if $\memberInfectRate < \boundRegime$ then only one representative is enough, 
otherwise pooling together the entire set is one's best option. 
Note that the exact knowledge of $\memberInfectRate$ and a rough prior may be easily acquired. For example, in realistic scenarios, where the infection rates are not expected to be very low inside the communities, 
pooling together the entire outer set is a good heuristic. 

Due to individually testing the heavily-infected outer sets, 
we obtain more accurate estimates of their infection rates, $\estimInfectRateDisjoint$, by computing the average proportion of infected members (line 10).
We use these estimates to decide how to test the inner sets of the component:
if an outer set $\disjointSetIdx$ exists whose members belong to a subset of communities in $\edgeSet_{\vertexSet_\disjointSetIdxSecond}$ and its estimated infection rate $\estimInfectRateDisjoint$ is above a threshold $\thresh$, 
then members of $\vertexSet_\disjointSetIdxSecond$ are tested individually (line 17) and a new estimate $\estimInfectRateDisjointSecond$ for the infection rate of that set is computed (line 18).
Else, members of $\vertexSet_\disjointSetIdxSecond$ are included in set $A$.
Our rationale follows the infection model described in \Cref{sec:infect-model}, 
which implies that the infection probability of the members of an inner set $\disjointSetIdxSecond$ will always be at least equal to the infection probability of the members (of an outer set $\disjointSetIdx$) whose community(ies) are a subset of $\edgeSet_{\vertexSet_\disjointSetIdx}$.
Hence, if an outer set is heavily infected then a corresponding inner set will be heavily infected, too.
In our experiments, we numerically examine the impact of $\thresh$.

Finally, we test all members of set $A$ that are not tested individually (because infection probability is presumably low) using traditional group testing (line 23).


\subsection{Non-adaptive algorithms}\label{sec:nonadapt}



For simplicity, we describe our non-adaptive algorithm using the symmetric case. 

\subsubsection*{Test Matrix}
We
divide the $(\nTests_1+\nTests_2) \times \nItems$  matrix $\testmatrix$ into two sub-matrices $\testmatrix_1$ and $\testmatrix_2$ of sizes $\nTests_1 \times \nItems$ and $\nTests_2 \times \nItems$.

\noindent $\triangleright$ The sub-matrix $\testmatrix_1$ identifies the non-infected outer sets using one mixed sample for each outer set (\Cref{sec:notation:models}).
If the number of tests available is large, we set $\nTests_1$ to be the number of outer sets, i.e., we use one test for each outer set; otherwise, in sparse $\nDefFamilies$ regimes, $\nTests_1$ can be closer to $\mathcal{O}(\nDefFamilies\log\frac{\nFamilies}{\nDefFamilies})$. 

\noindent $\triangleright$ Assume that $\nTests_2=\frac{\nItems}{\nOnesPerRow}$, for some constant $\nOnesPerRow$.
The sub-matrix $\testmatrix_2$ of size $\nTests_2 \times \nItems$ 
has one ``1" in each column (each of the $\nItems$ member participates in one test) and $\nOnesPerRow$  ``1"s in each row (each test pools together $\nOnesPerRow$ members); equivalently, $\testmatrix_2$
is a concatenation of $\nOnesPerRow$ identity matrices $I_{\nTests_2}$, 
i.e., $\testmatrix_2=\left[I_{\nTests_2}~I_{\nTests_2}\cdots I_{\nTests_2}\right]$. For $\nOnesPerRow=1$, this reduces to individual testing.
The design of $\testmatrix_2$ amounts to deciding which members are placed in the same test. We propose that no two members from the same outer set are placed in the same test and that members from the same inner set are placed in the same test ($\nOnesPerRow$ members in each test). 
 
\subsubsection*{Decoding} 
We use the test outcomes of $\testmatrix_1$ to identify the non-infected outer sets and proceed to remove the corresponding columns from $\testmatrix_2$. We next use the remaining columns of  $\testmatrix_2$ 
and combinatorial orthogonal matching pursuit (COMP) to identify the infected members, namely:
$(i)$	A member is identified as non-infected  if it is included in at least one negative test in $\testmatrix_2$. 
$(ii)$	All other members, that are only included in positive tests in $\testmatrix_2$, are identified as infected.

\subsubsection*{Intuition} Suppose infected communities have a large percentage (say $>40\%$) of infected members. The idea is that pooling together multiple highly correlated items in the same test (such as people in the same outer/inner set) enables COMP to mark all these items as non-defective in case of a negative result.

\subsubsection*{Example}
We here illustrate for a special case our proposed design for matrix $\testmatrix_2$ and the resulting error rate our algorithm achieves.
Assume that we have $\nFamilies$ communities, 
where $2\nFamilies_o$  communities  pairwise overlap
(each community overlaps with exactly one other community) 
and the remaining  $\nFamilies-2\nFamilies_o$ communities
do not overlap with any other community.
Assume each community has $\nMembersSymmetric$ members, and overlapping communities share
 $\nMembersSymmetric_o$ members. 
We construct the sub-matrix $\testmatrix_2$  of size $\nTests_2 \times \nItems$ 
as in the following example that uses $\nFamilies=6$, $\nFamilies_o=2$, $\nMembersSymmetric=3$, $\nMembersSymmetric_o=1$:
\begin{figure}[h!]
    \centering
    \vspace{-0.4cm}
    \includegraphics[scale=0.35]{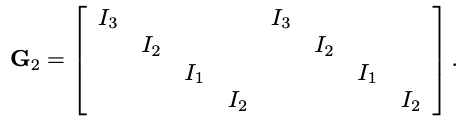}
    \vspace{-0.4cm}
\end{figure}


This matrix starts with $\nBlocks_1=\frac{\nFamilies-2\nFamilies_o}{\nOnesPerRow}$ block-rows that each contains $\nOnesPerRow$ identity matrices  $I_{\nMembersSymmetric}$, one corresponding to each non-overlapping community.
We then have $\nBlocks_2=\frac{\nFamilies_o}{\nOnesPerRow}$ block-rows each containing $\nOnesPerRow$  identity matrices $I_{2\nMembersSymmetric-\nMembersSymmetric_o}$, one for each pair of overlapping communities. Each $I_{2\nMembersSymmetric-\nMembersSymmetric_o}$ matrix contains three matrices $I_{\nMembersSymmetric-\nMembersSymmetric_o}$, $I_{\nMembersSymmetric_o}$, and $I_{\nMembersSymmetric-\nMembersSymmetric_o}$ corresponding to the members that belong only in one of the communities, or in both.
Note that $\nFamilies=(\nBlocks_1+2\nBlocks_2)c$ and $\nTests_2=\nBlocks_1\nMembersSymmetric+\nBlocks_2(2\nMembersSymmetric-\nMembersSymmetric_o)$. 

\subsubsection*{Error Rate} 
Note that our decoding strategy leads to zero FN errors. 
The following lemma provides an analysis of the error (FP) rate for the design of $\testmatrix_2$ in the example which is defined as: 
$R(\text{error})\triangleq \sfrac{1}{\nItems}\cdot|\{\memberIndex:\hat{\defVariable}_\memberIndex\neq \defVariable_\memberIndex\}|$.
We provide the expected error rate for only the probabilistic model (II) for the purpose of comparison with traditional Bernoulli design in Fig.~\ref{fig:joint_figure}(a).

\begin{lemma}
	\label{lemma-ErrorRate-nonadaptive}
	For $\testmatrix_2$ as in the example, the error rate is calculated for the probability model (II) as:
	\begin{align}
	R_{II}(\text{error})&=\frac{1}{\nItems}\Big[\left(1-(1-\memberInfectRateSymmetric\familyInfectRate)^{\nOnesPerRow-1}\right)\cdot N_1 \nonumber \\
	&\qquad\qquad + \left(1-(1-\memberInfectRateSymmetric\familyInfectRate)^{2(\nOnesPerRow-1)}\right)\cdot N_2\Big], \label{ErrorRate-2} 
	\end{align}
	where $N_1$ and $N_2$ are the expected number of non-overlapped and overlapped members in infected communities that are non-infected, respectively, and can be obtained as 
	\begin{align*}
	N_1&=(\nFamilies-2\nFamilies_o)\familyInfectRate(1-\memberInfectRateSymmetric)\nMembersSymmetric + 2\nFamilies_o\familyInfectRate(1-\memberInfectRateSymmetric)(\nMembersSymmetric-\nMembersSymmetric_o)  \\
	N_2&=\nFamilies_o\left(1-(1-\familyInfectRate)^2\right)(1-\memberInfectRateSymmetric)\nMembersSymmetric_o.
	\end{align*}
\end{lemma}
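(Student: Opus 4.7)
The plan is to decompose the error rate cleanly by exploiting two structural facts about the design of $\testmatrix_2$: every column contains exactly one ``1'', and within any test the members pooled together belong to disjoint communities or disjoint overlapping pairs, so that their infection statuses are independent. These two facts turn the error count into a sum of easy‐to‐compute Bernoulli events.

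First I would argue that the COMP decoder in combination with this design produces no false negatives: any truly infected member forces every test it participates in to be positive, hence such a member is never included in a negative test and is therefore never marked as non-infected. So the only source of error is false positives, and $R_{II}(\text{error})$ reduces to $\tfrac{1}{\nItems}\,\expect\!\bigl[\,\#\{i : U_i = 0,\ \hat U_i = 1\}\bigr]$. Since each column of $\testmatrix_2$ has a unique ``1'', a non-infected member $i$ is mislabelled iff the unique test containing $i$ contains at least one other infected member. Moreover, once $\testmatrix_1$ correctly identifies the non-infected outer sets, the only columns of $\testmatrix_2$ that remain relevant are those corresponding to members of infected communities; members of non-infected communities are automatically cleared, so they contribute no error. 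This reduces the bookkeeping to counting non-infected members in infected communities.

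Next I would split these candidates according to the two block-row structures of $\testmatrix_2$. For a non-infected degree-1 member $i$ (i.e., a member of a non-overlapping community, or an exclusive member of one community of an overlapping pair), the test containing $i$ pools $\nOnesPerRow-1$ other members coming one-per-community from $\nOnesPerRow-1$ distinct communities. Because those communities' infection statuses are independent of each other and of $i$'s own community, each of these $\nOnesPerRow-1$ members is marginally infected with probability $\memberInfectRateSymmetric\familyInfectRate$ independently, yielding $\Prob(\text{some other member infected}\mid U_i=0,\text{community of }i\text{ infected})= 1-(1-\memberInfectRateSymmetric\familyInfectRate)^{\nOnesPerRow-1}$. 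Summing $\Prob(U_i = 0,\ \text{community of $i$ infected})$ over all degree-1 candidates produces exactly the two contributions comprising $N_1$ (the $\nMembersSymmetric$ members of each non-overlapping community, and the $\nMembersSymmetric-\nMembersSymmetric_o$ exclusive members of each overlapping community, each with factor $\familyInfectRate(1-\memberInfectRateSymmetric)$); the independence above then lets me factor $1-(1-\memberInfectRateSymmetric\familyInfectRate)^{\nOnesPerRow-1}$ outside the sum, giving the first term of \eqref{ErrorRate-2}.

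For the degree-2 (shared) members I would repeat the same argument with one twist. A shared member $i$ from some overlapping pair sits in a test with $\nOnesPerRow-1$ other shared members drawn from $\nOnesPerRow-1$ \emph{other} overlapping pairs, and each such other shared member $j$ is marginally infected with probability $1-(1-\memberInfectRateSymmetric\familyInfectRate)^2$, since $j$'s two incident communities are infected i.i.d.\ with probability $\familyInfectRate$ and, conditioned on the set $\setDef(\edgeSubset_j)$ of infected ones, $j$ is infected w.p.\ $1-(1-\memberInfectRateSymmetric)^{|\setDef(\edgeSubset_j)|}$. Because the pairs are disjoint, these $\nOnesPerRow-1$ events are mutually independent and independent of $i$'s state, so the conditional FP probability is $1-\bigl((1-\memberInfectRateSymmetric\familyInfectRate)^2\bigr)^{\nOnesPerRow-1}=1-(1-\memberInfectRateSymmetric\familyInfectRate)^{2(\nOnesPerRow-1)}$. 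Multiplying by the expected number $N_2=\nFamilies_o\,(1-(1-\familyInfectRate)^2)(1-\memberInfectRateSymmetric)\nMembersSymmetric_o$ of non-infected shared members lying in at least one infected community yields the second term. Adding the two contributions and dividing by $\nItems$ gives~\eqref{ErrorRate-2}.

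The main obstacle, and the one I would spend most care on, is the independence argument in the shared-member case: I must verify that for a given shared member $i$ the random variables $\{U_j : j\in\text{test}(i)\setminus\{i\}\}$ are indeed independent of each other and of $U_i$. This works only because the design places at most one representative of any overlapping pair in any test, so each pair contributes an independent source of randomness; if pairs shared communities or if $\testmatrix_2$ were constructed differently, the cross-terms $\familyInfectRate^2$ would no longer factor cleanly and the $(1-\memberInfectRateSymmetric\familyInfectRate)^{2(\nOnesPerRow-1)}$ expression would break.
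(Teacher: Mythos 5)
Your proof takes essentially the same route as the paper's: zero false negatives under COMP, a per-member false-positive probability of $1-(1-\memberInfectRateSymmetric\familyInfectRate)^{\nOnesPerRow-1}$ for degree-one members and $1-(1-\memberInfectRateSymmetric\familyInfectRate)^{2(\nOnesPerRow-1)}$ for shared members (derived, as in the paper, from the one-member-per-community/pair placement in each test and the resulting independence across the $\nOnesPerRow-1$ co-tested members), multiplied via linearity of expectation by the counts $N_1$ and $N_2$ of non-infected members in infected communities. The one discrepancy worth noting is that the paper's own proof only establishes the right-hand side as an \emph{upper bound} on the expected number of false positives (writing $N_0\le\cdots$, since $\testmatrix_1$ can additionally clear outer sets of infected communities that happen to contain no infected member, and, conversely, $\testmatrix_1$ never clears the inner/shared sets of fully uninfected pairs, which your "automatically cleared" step glosses over), whereas you assert exact equality---though this matches the equality sign in the lemma as stated.
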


\begin{figure}
    \centering
    \includegraphics[scale = 0.2]{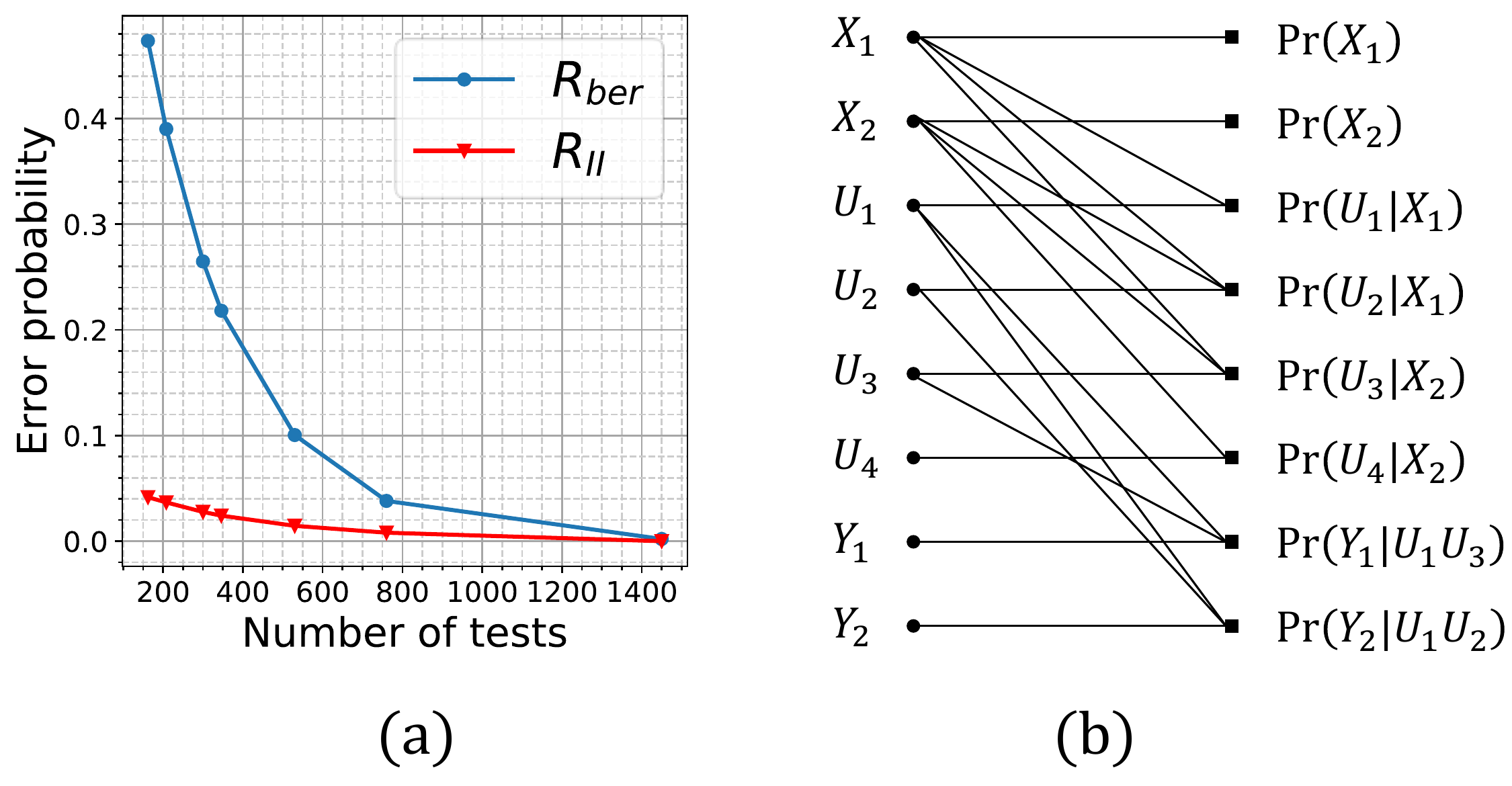}
    \caption{(a). Error rate for Bernoulli design vs $G_1G_2$ design for the example. (b). An example of factor graph.}
    \label{fig:joint_figure}
    \vspace{-0.4cm}
\end{figure}

The error rate of traditional group testing using Bernoulli design (with parameter $\frac{1}{\nDef}$) and COMP decoding has an error rate of 
$R_{\text{tradition}}(\text{error})=\sfrac{1}{\nItems} \cdot (\nItems-\nDef)\left(1-1/\nDef(1-1/\nDef)^k\right)^\nTests.$
Fig.~\ref{fig:joint_figure}(a) depicts $R(\text{error})$ for parameters $\nFamilies=150$, $\nFamilies_o=60$, $\nMembersSymmetric=10$, $\nMembersSymmetric_o=2$, $\familyInfectRate= 0.2$, and $\memberInfectRateSymmetric=0.2$.


\begin{figure*}[t]
\centering
\begin{multicols}{3}
    \includegraphics[width=\linewidth, height = 3.7cm]{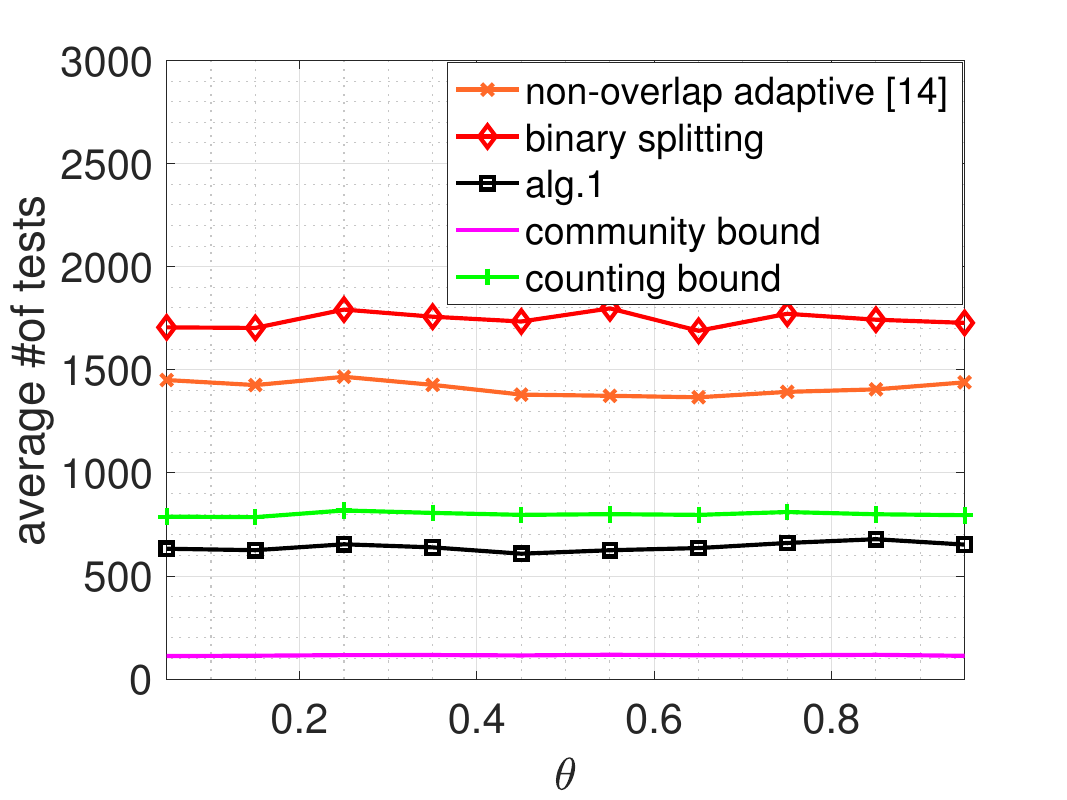}\par	
    \caption{Average number of tests comparison of various adaptive algorithms and combinatorial bound. Here $n=3000$, $F\sim \text{Uniform}[15,25]$, $q=0.05$, $p_e \sim \text{Uniform}[0.3,0.9]$.}
	\label{fig:adaptive}
   \includegraphics[width=\linewidth, height = 3.7cm]{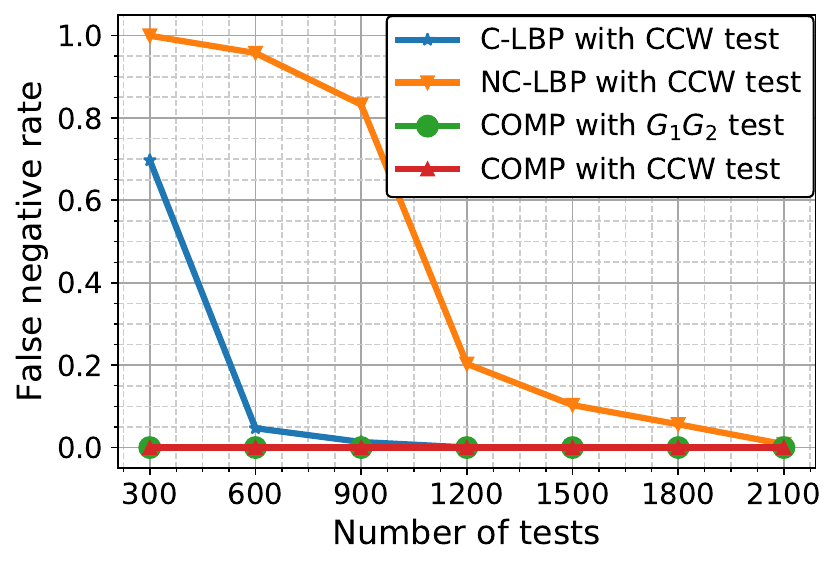}\par	    
   \caption{FN rate comparison of various non-adaptive test designs with corresponding decoding algorithms. Here $n=3000$, $F\sim \text{Uniform}[15,25]$, $q=0.05$, $p_e \sim \text{Uniform}[0.3,0.9]$.}
    \label{fig:fn_nonadaptive}
   \includegraphics[width=\linewidth, height = 3.7cm]{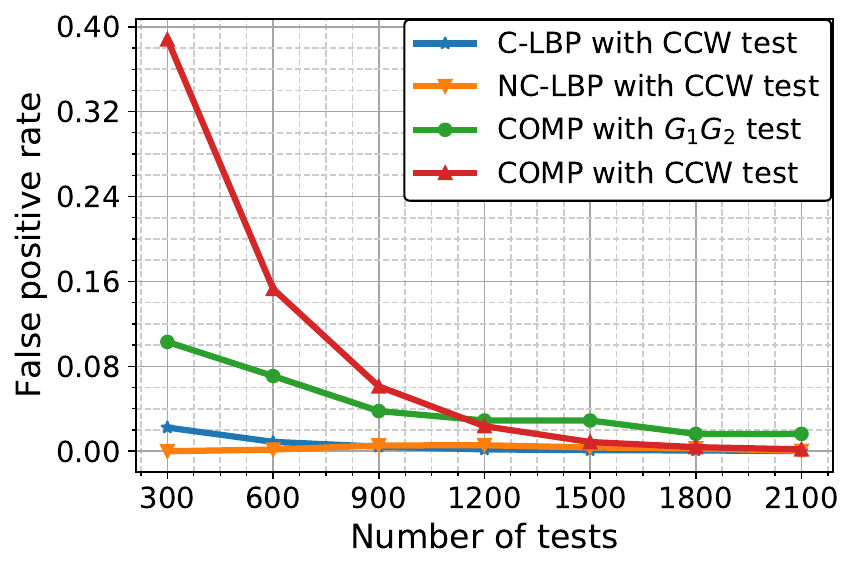}\par	    
   \caption{FP rate comparison of various non-adaptive test designs with corresponding decoding algorithms. Here $n=3000$, $F\sim \text{Uniform}[15,25]$, $q=0.05$, $p_e \sim \text{Uniform}[0.3,0.9]$.}
    \label{fig:fp_nonadaptive}
\end{multicols}
\vspace{-0.8cm}
\end{figure*}

\section{Loopy belief propagation decoder}
\label{sec:LBP}
Apart from COMP, we also use loopy belief propogation (LBP \cite{kschischang2001factor})  to infer the infection status of the individual (and communities). LBP forms an estimate of the posterior probability that an individual (or a community) is infected, given the test results. This estimate is exact when the underlying factor graph describing the joint distribution is a tree, however this is rarely the case. Nevertheless, it is an algorithm of practical importance and has achieved success on a variety of applications.
Also, LBP offers soft information (posterior distributions), which is  more useful than hard decisions in the context of disease-spread management.

We use LBP for our probabilistic model, because it is fast and can be easily configured to take into account the community structure.  
Many inference algorithms exist that estimate the posterior marginals, some of which have also been employed for group testing. 
For example, GAMP~\cite{zhu2020noisy} and Monte-Carlo sampling~\cite{cuturi2020noisy} yield more accurate decoders.
The main focus of this work is to examine whether benefits from accounting for the community structure (both at the test design and the decoder) exist; 
hence we think that considering a simple (possibly sub-optimal) decoder based on LBP is a good first step; 
we defer more complex designs to future work.

We next describe the factor graph and the belief propagation update rules for our probabilistic model (II).
Let the infection status of each community $\familyIndex$ be $\defFamilyVariable_\familyIndex \sim \Ber(\familyInfectRate)$. 
Moreover, let $S_v$ denote the set of communities that $\defVariable_{\memberIndex}$ belongs to. Then:
\begin{align}\label{eq:factorizedDistribution}
\Pr(\defFamilyVariable_1,&...,\defFamilyVariable_\nFamilies,\defVariable_{1},...,\defVariable_{\nItems},Y_{1},...,Y_{\nTests}) = \nonumber\\
&\prod_{\familyIndex=1}^\nFamilies \Pr(\defFamilyVariable_\familyIndex) \prod_{\memberIndex=1}^\nItems \Pr(\defVariable_{\memberIndex}|\defFamilyVariable_{S_v}) \prod_{\testIndex = 1}^\nTests \Pr(Y_{\testIndex}|\defVariable_{\delta_{\testIndex}}),
\end{align}
where $\delta_{\testIndex}$ is the group of people included in test $\testIndex$.
\Cref{eq:factorizedDistribution} can be represented by a factor graph, where the variable nodes correspond to the variables  $\defFamilyVariable_\familyIndex,\defVariable_{\memberIndex},Y_{\testIndex}$ and the factor nodes correspond to $\Pr(\defFamilyVariable_\familyIndex), \Pr(\defVariable_{\memberIndex}|\defFamilyVariable_{S_v}), \Pr(Y_{\testIndex}|\defVariable_{\delta_{\testIndex}})$; 
Fig.~\ref{fig:joint_figure}(b) shows an example of $2$ communities, $4$ members and $2$ tests.

Given the result of each test is $y_\testIndex$, i.e., $Y_{\testIndex}=y_{\testIndex}$, 
LBP estimates the marginals $\Pr(\defFamilyVariable_\familyIndex=v|Y_1=y_1,...,Y_\nTests=y_T)$  and  $\Pr(\defVariable_{\memberIndex}=u|Y_1=y_1,...,Y_\nTests=y_\nTests)$, 
by iteratively exchanging messages across the variable and factor nodes.
The messages are viewed as \textit{beliefs} about that variable or distributions (a local estimate of $\Pr(\text{variable}|\text{observations})$). 
Since all random variables are binary, each message is a 2-dimensional vector. 

We use the factor graph framework from \cite{kschischang2001factor} to compute the messages:
Variable nodes $Y_\testIndex$ continually transmit the message $[0,1]$ if $y_\testIndex=1$ and $[1,0]$ if $y_\testIndex = 0$ on its incident edge, at every iteration.
Each other variable node ($\defFamilyVariable_\familyIndex$ and $\defVariable_{\memberIndex}$) uses the following rule: 
for incident edge $\epsilon$, the node computes the elementwise product of the messages from every other incident edge   and transmits this along $\epsilon$.
For the factor node messages, we derive closed-form expressions for the sum-product update rules (akin to equation (6) in \cite{kschischang2001factor}). 
\pavlos{The exact messages are described in the Appendix of our extended version \cite{techRpt}.}


\section{Numerical evaluation}\label{section-experiments} 

In this section, we evaluate the benefit of accounting for the community structure, in terms of error rate and number of tests required, using $100$ random structures, each having $\nItems=3000$ members participating in about $200$ overlapping communities.

\noindent\textbf{Experimental setup.}
We generate each structure using the following rules:
the size of each community is selected uniformly at random from the range $[15,25]$,
and
each member is randomly allocated in at most $4$ communities (according to a geometric distribution). 
Then, the members become infected according to the probabilistic model (II): 
each community $\edgeIndex$ gets infected w.p. $\familyInfectRate=0.05$; 
and if infected, then its infection rate $\memberInfectRate$ is randomly chosen from the interval $[0.3,0.9]$.
We remark that our experimental setup yields a linear infection regime; the fraction of infected members about $5\%$ overall. 
We preferred such a setup in order to stress the performance of our algorithms, as we know that group testing generally shows less benefits in linear regimes. 

For the adaptive algorithms, we compare:  
the binary splitting algorithm (BSA)~\cite{GroupTestingMonograph},
which is the best traditional alternative when 
the number of infected members is unknown;
the algorithm proposed in~\cite{GroupTesting-community} that considers communities but no overlap;
and Alg.~\ref{algorithm-noiseless} with BSA in the place of $\adapt()$.

For the non-adaptive test matrix designs, we compare:
\textit{$G_1G_2$},  our proposed test design  in \Cref{sec:nonadapt};
and \textit{CCW}, constant-column-weight algorithms, where each item is included in a fixed number $w$ of tests selected uniformly at random. $w$ is assumed to be of the form $w=\alpha \frac{T}{k}$, where $k$ is an estimate of the number of defectives in the population. We  
exhaustively search to find the best value of $\alpha\in [0,1]$.

We also compare LBP and COMP decoding:
\textit{C-LBP} is our proposed algorithm in section \ref{sec:LBP}, that  takes into account the community structure.
\textit{NC-LBP}, does not take into account the community structure, i.e., assumes that each individual is i.i.d infected with the same probability $p_{iid}$.
\textit{COMP}, described in~\cite{GroupTestingMonograph}, has a zero FN probability by design.

\noindent\textbf{Results.} \\
$(i)$ \textit{Adaptive test designs.} 
\pavlos{For each community structure, 
we measured the number of tests needed by each algorithm to achieve zero-error identification.
Since Alg.~\ref{algorithm-noiseless} depends on $\thresh$, the threshold used at line $16$, we evaluated its performance for various values of $\thresh$.
\Cref{fig:adaptive} depicts the average performance of our algorithm (for each $\thresh$, we average over $100$ randomly generated structures). 
Alg.~\ref{algorithm-noiseless} was proved resilient to the choice of $\thresh$ and needed on average $>55\%$ fewer tests than the other algorithms.
Its performance was also better than the counting bound, 
which is our best hope with traditional group testing. 
Our findings were similar for sparser infection regimes (see results in extended version~\cite{techRpt}), and there were cases where our algorithm performed closer to the community bound~\eqref{eq:combinatorialBound}.}

\noindent $(ii)$ \textit{Non adaptive test designs}. In our experiments, we measured the FN/FP rates achieved by the non-adaptive test designs and the corresponding decoders. Fig.~\ref{fig:fn_nonadaptive} and Fig.~\ref{fig:fp_nonadaptive} depict FN and FP rates as a function of $\nTests\in[300,2100]$, respectively. 
The key takeaways are as follows: \\
$\bullet$ C-LBP with CCW attains zero FP and FN at $1200$ tests while  COMP and NC-LBP with CCW (which are agnostic to the community structure) attain zero FP and FN only at $1800$ and $2100$ tests respectively. This illustrates potential benefits of making the decoder aware of the community structure.
\\ $\bullet$ If we desire a zero FN rate (or if we would like to use a simple decoder) and we are constrained to use less than $1000$ tests, the $G_1G_2$ test design with COMP gives the lowest FP rates. This illustrates the benefit of designing tests matrices that take into account the community structure.


\section{Conclusion}
\label{sec:conclusion}
\pavlos{
This work attempted to relax the ``independent-infections'' assumption (that is traditionally made) in group testing, by introducing correlated infections through an overlapping community structure.
In our proposed infection model, an individual belongs to one or more communities, and the infection probability depends on the infected communities (s)he participates in---a model that is suitable for universities or schools, where students join various classrooms.
Given this model, we derived new lower bounds for the number of tests needed in order to perfectly identify the status of each individual,
and we provided adaptive and non-adaptive testing algorithms that incorporate the community structure in their test designs or decoding. 
Our algorithms are not always optimal w.r.t.\ the lower bounds, but perform significantly better than community-agnostic group testing; 
per our experiments, they need $30\%$-$65\%$ fewer tests (on average) to achieve the same identification accuracy.
In our opinion, this result illustrates the important savings that community structure may offer and shows that it is worth investigating more sophisticated models and/or algorithms in the future.}

\bibliographystyle{IEEEtran}
\bibliography{refs}

\begin{thebibliography}{10}
\providecommand{\url}[1]{#1}
\csname url@samestyle\endcsname
\providecommand{\newblock}{\relax}
\providecommand{\bibinfo}[2]{#2}
\providecommand{\BIBentrySTDinterwordspacing}{\spaceskip=0pt\relax}
\providecommand{\BIBentryALTinterwordstretchfactor}{4}
\providecommand{\BIBentryALTinterwordspacing}{\spaceskip=\fontdimen2\font plus
\BIBentryALTinterwordstretchfactor\fontdimen3\font minus
  \fontdimen4\font\relax}
\providecommand{\BIBforeignlanguage}[2]{{%
\expandafter\ifx\csname l@#1\endcsname\relax
\typeout{** WARNING: IEEEtran.bst: No hyphenation pattern has been}%
\typeout{** loaded for the language `#1'. Using the pattern for}%
\typeout{** the default language instead.}%
\else
\language=\csname l@#1\endcsname
\fi
#2}}
\providecommand{\BIBdecl}{\relax}
\BIBdecl

\bibitem{GroupTest-implement1}
\BIBentryALTinterwordspacing
S.~Mallapaty, ``The mathematical strategy that could transform coronavirus
  testing,'' 2020. [Online]. Available:
  \url{https://www.nature.com/articles/d41586-020-02053-6}
\BIBentrySTDinterwordspacing

\bibitem{GroupTest-implement2-FDA}
\BIBentryALTinterwordspacing
FDA, ``Pooled sample testing and screening testing for covid-19,'' 2020.
  [Online]. Available:
  \url{https://www.fda.gov/medical-devices/coronavirus-covid-19-and-medical-devices/pooled-sample-testing-and-screening-testing-covid-19}
\BIBentrySTDinterwordspacing

\bibitem{TrackingCommunity-1}
``Decentralized privacy-preserving proximity tracing,'' \emph{arXiv preprint
  arXiv:2005.12273}, 2020.

\bibitem{TrackingCommunity-Google}
``Google covid-19 community mobility reports: Anonymization process description
  (version 1.0),'' \emph{arXiv preprint arXiv:2004.04145v2}, 2020.

\bibitem{GroupTestingMonograph}
M.~Aldridge, O.~Johnson, and J.~Scarlett, ``Group testing: an information
  theory perspective,'' \emph{CoRR}, vol. abs/1902.06002, 2019.

\bibitem{LinBndIndv2}
M.~C. Hu, F.~K. Hwang, and J.~K. Wang, ``A boundary problem for group
  testing,'' \emph{SIAM Journal on Algebraic Discrete Methods}, 1981.

\bibitem{LinBndIndv-3-Ungar-1960}
P.~Ungar, ``Cutoff points in group testing,'' \emph{Comm. Pure Appl. Math},
  vol.~13, p. 49–54, 1960.

\bibitem{CntBnd}
O.~T. Johnson, ``Strong converses for group testing from finite block- length
  results,'' \emph{IEEE Trans. Inf. Theory}, 2017.

\bibitem{GroupTestingBook}
D.-Z. Du and F.~Hwang, \emph{{Combinatorial Group Testing and Its
  Applications}}.\hskip 1em plus 0.5em minus 0.4em\relax Series on Applied
  Mathematics, 1993.

\bibitem{coja-oghlan19}
A.~{Coja-Oghlan}, O.~{Gebhard}, M.~{Hahn-Klimroth}, and P.~{Loick},
  ``Information-theoretic and algorithmic thresholds for group testing,''
  \emph{IEEE Trans. Inf. Theory}, 2020.

\bibitem{coja-oghlan20a}
A.~Coja-Oghlan, O.~Gebhard, M.~Hahn-Klimroth, and P.~Loick, ``Optimal group
  testing,'' ser. Proc. of Machine Learning Research, J.~Abernethy and
  S.~Agarwal, Eds., vol. 125, Jul. 2020, pp. 1374--1388.

\bibitem{LinBndIndv1}
L.~Riccio and C.~J. Colbourn, ``Sharper bounds in adaptive group testing,''
  \emph{Taiwanese Journal of Mathematics}, p. 669–673, 2000.

\bibitem{GroupTesting-community}
P.~{Nikolopoulos}, S.~R. {Srinivasavaradhan}, T.~{Guo}, C.~{Fragouli}, and
  S.~{Diggavi}, ``Community aware group testing,'' \emph{arXiv preprint
  arXiv:2007.08111}, 2020.

\bibitem{aistats-paper}
P.~Nikolopoulos, S.~R. Srinivasavaradhan, T.~Guo, C.~Fragouli, and S.~Diggavi,
  ``Group testing for connected communities,'' in \emph{Proc. of the 24th
  International Conf. on Artificial Intelligence and Statistics}, 2021.

\bibitem{zhu2020noisy}
J.~Zhu, K.~Rivera, and D.~Baron, ``Noisy pooled pcr for virus testing,''
  \emph{arXiv preprint arXiv:2004.02689}, 2020.

\bibitem{goenka2020contact}
R.~Goenka, S.-J. Cao, C.-W. Wong, A.~Rajwade, and D.~Baron, ``Contact tracing
  enhances the efficiency of covid-19 group testing,'' \emph{arXiv preprint
  arXiv:2011.14186}, 2020.

\bibitem{prior}
T.~Li, C.~L. Chan, W.~Huang, T.~Kaced, and S.~Jaggi, ``Group testing with prior
  statistics,'' \emph{CoRR}, vol. abs/1401.3667, 2014.

\bibitem{kschischang2001factor}
F.~R. Kschischang, B.~J. Frey, and H.-A. Loeliger, ``Factor graphs and the
  sum-product algorithm,'' \emph{IEEE Trans. Inf. Theory}, 2001.

\bibitem{cuturi2020noisy}
M.~Cuturi, O.~Teboul, and J.-P. Vert, ``Noisy adaptive group testing using
  bayesian sequential experimental design,'' \emph{arXiv preprint
  arXiv:2004.12508}, 2020.

\bibitem{techRpt}
P.~{Nikolopoulos}, S.~R. {Srinivasavaradhan}, T.~{Guo}, C.~{Fragouli}, and
  S.~{Diggavi}, ``Group testing for overlapping communities,'' \emph{arXiv
  preprint arXiv:2012.02804}, 2020.

\end{thebibliography}

\appendix
\section{Appendix for Section~\ref{sec:lower-bounds}: lower bounds}
\label{app:sec:lower-bounds}
    \subsection{Proof of Theorem~\ref{thm:combinatorialBound}}
\begin{proof}
	\pavlos{There exist
	$\binom{\nFamilies}{\nDefFamilies} \cdot \prod_{\component \in \graph} \prod_{\disjointSetIdx \in \disjointSets} \binom{|\vertexSet_\disjointSetIdx|}{\nDefMembersDisjointSet}$ possible combinations of infected members.
 	This is because 
	there are $\binom{\nFamilies}{\nDefFamilies}$  combinations of infected communities, each of which has the same chance of occurring,
	and is associated with a structure of connected components. 
	In each disjoint set $\disjointSetIdx \in \disjointSets$ of every connected component $\component \in \graph$, there are $\binom{|\vertexSet_\disjointSetIdx|}{\nDefMembersDisjointSet}$ possible combinations of infected members, each of which has the same chance of occurring\footnote{Note that the product is over all disjoint sets instead of only the infected ones, because $\binom{|\vertexSet_\disjointSetIdx|}{\nDefMembersDisjointSet} = 1$, whenever $\nDefMembersDisjointSet = 0$.}.
	
	To achieve zero-error identification, each combination of infected members must give a different set of test results. 
	Given that there are only $2^\nTests$ possible results, we need: $2^\nTests \ge \binom{\nFamilies}{\nDefFamilies} \cdot \prod_{\component \in \graph} \prod_{\disjointSetIdx \in \disjointSets} \binom{|\vertexSet_\disjointSetIdx|}{\nDefMembersDisjointSet},$ which completes the proof.}
\end{proof}

\subsection{Proof of Theorem~\ref{thm:probabilisticBound}}
\begin{proof}
\snote{
Let 
$\defFamilyVariables$ be the indicator random vector for the infection status of all communities.
By rephrasing~\cite[Theorem 1]{prior}, any probabilistic group testing algorithm using $\nTests$ noiseless tests can achieve a zero-error reconstruction of $\defVector$ if:
	\begin{equation}
	\nTests \ge \entropy(\defVector) = \entropy(\defFamilyVariables) + \entropy(\defVector|\defFamilyVariables) - \entropy(\defFamilyVariables|\defVector).
	\end{equation} 	
	The first term is:
	    $\entropy({\defFamilyVariables}) = \sum_{\familyIndex=1}^\nFamilies \entropy(\defFamilyVariable_\familyIndex) = \nFamilies \binEntropy(\familyInfectRate).$
	
	\noindent The second term is calculated as:
	\begin{align*}
	    &\entropy({\bf U}|{\defFamilyVariables}) \overset{(a)}{=} \sum_{\vertexIndex=1}^n \entropy(U_v|\defFamilyVariables_{\edgeSet_\vertexIndex}) \\
	     &=\sum_{\vertexIndex=1}^n \sum_{{\bf x} \in \{0,1\}^{|\edgeSet_\vertexIndex|}}\Pr({\defFamilyVariables}_{\edgeSet_\vertexIndex}={\bf x}) \entropy(U_v|\defFamilyVariables_{\edgeSet_\vertexIndex}=\bf x)\\
	    &\overset{(b)}{=} \sum_{\vertexIndex=1}^n \sum_{\edgeSubset \subseteq \edgeSet_\vertexIndex} \familyInfectRate^{|\edgeSubset|}(1-\familyInfectRate)^{|\edgeSet_\vertexIndex|-|\edgeSubset|} \entropy(U_v|\defFamilyVariables_{\edgeSubset}=\textbf 1,\defFamilyVariables_{\edgeSet_\vertexIndex\setminus \edgeSubset}=\mathbf 0)\\
	    &= \sum_{\vertexIndex=1}^n \sum_{\edgeSubset \subseteq \edgeSet_\vertexIndex} \familyInfectRate^{|\edgeSubset|}(1-\familyInfectRate)^{|\edgeSet_\vertexIndex|-|\edgeSubset|}  \binEntropy(\prod_{\familyIndex \in \edgeSubset}(1-\memberInfectRate)),
	\end{align*}
	where in $(a)$, $\edgeSet_\vertexIndex$ refers to the set of communities member $\vertexIndex$ belongs to, and in $(b)$ the subset $\edgeSubset$ is the subset of infected communities in $\edgeSet_\vertexIndex$.
	
	\noindent Finally, we upper bound the third term as:
	\begin{align*}
	    \entropy&(\defFamilyVariables|\defVector) \leq \sum_{\familyIndex=1}^\nFamilies \entropy(\defFamilyVariable_\familyIndex|\defVector)
	    = \sum_{\familyIndex=1}^\nFamilies \entropy(\defFamilyVariable_\familyIndex|\defVector_{\mathcal S_e})\\
	    &= \sum_{\familyIndex=1}^\nFamilies \Pr(\defVector_{S_e}=\mathbf 0) \binEntropy(\Pr(\defFamilyVariable_\familyIndex=0|\defVector_{S_e}=\mathbf 0))\\
	    &= \sum_{\familyIndex=1}^\nFamilies (1-\familyInfectRate+\familyInfectRate(1-\memberInfectRate)^{|S_e|}) \binEntropy\left (\frac{1-\familyInfectRate}{1-\familyInfectRate+\familyInfectRate(1-\memberInfectRate)^{|S_e|}}\right ),
	\end{align*}
	where $S_e$ is the set of members who \textit{only} belong to community $\familyIndex$. 
	Combining all the 3 terms
 	concludes the proof.}
\end{proof}


\section{Appendix for \cref{sec:nonadapt}: The Noiseless Nonadaptive case}
\label{app:sec:Non-adaptive}

\subsection{Proof of \Cref{lemma-ErrorRate-nonadaptive}}
\label{app:lemma-ErrorRate-nonadaptive-proof}
For a non-overlapped non-infected member $\memberIndex$ that belongs to only one community, the probability that $\memberIndex$ is misidentified as infected is $1-(1-\memberInfectRateSymmetric\familyInfectRate)^{\nOnesPerRow-1}$. 
For an overlapped non-infected member $\memberIndex$ that belongs to more than one communities, the probability that $\memberIndex$ is misidentified as infected is $1-(1-\memberInfectRateSymmetric\familyInfectRate)^{2(\nOnesPerRow-1)}$. 
Note that we assume the decoding of $\testmatrix_1$ has no errors, i.e., it identifies all non-infected outer sets correctly. 
Then for the pairwise overlap structure in the example, the infection status of all non-overlapped communities and non-overlapped parts are identified correctly. 
The COMP decoding of $\testmatrix_2$ has no FNs. 
The expected total number of FPs $N_0$ can be obtained as $N_0\leq \left(1-(1-\memberInfectRateSymmetric\familyInfectRate)^{\nOnesPerRow-1}\right)\cdot N_1 + \left(1-(1-\memberInfectRateSymmetric\familyInfectRate)^{2(\nOnesPerRow-1)}\right)\cdot N_2$, where the inequality is because the RHS have not used the testing resluts of $\testmatrix_1$, $N_1$ and $N_2$ are the expected number of non-overlapped and overlapped members in infected communities, respectively. 
We can calculate $N_1$ as follows,
\begin{align}
N_1=(\nFamilies-2\nFamilies_o)\familyInfectRate(1-\memberInfectRateSymmetric)\nMembersSymmetric + 2\nFamilies_o\familyInfectRate(1-\memberInfectRateSymmetric)(\nMembersSymmetric-\nMembersSymmetric_o),
\end{align}
where $(\nFamilies-2\nFamilies_o)\familyInfectRate$ is the expected number of infected non-overlapped communities, 
$(1-\memberInfectRateSymmetric)\nMembersSymmetric$ is the expected number of non-infected members in each infected non-overlapped community, 
$2\nFamilies_o\familyInfectRate$ is the expected number of infected overlapped communities, 
and $(1-\memberInfectRateSymmetric)(\nMembersSymmetric-\nMembersSymmetric_o)$ is the expected number of non-infected members in each infected overlapped community. 
Similarly, $N_2$ can be calculated as 
\begin{align}
N_2=\nFamilies_o\left(1-(1-\familyInfectRate)^2\right)(1-\memberInfectRateSymmetric)\nMembersSymmetric_o,
\end{align}
where $\nFamilies_o\left(1-(1-\familyInfectRate)^2\right)$ is the expected number of overlaps, 
and $(1-\memberInfectRateSymmetric)\nMembersSymmetric_o$ is the expected number of non-infected members in each overlapped part.

\end{document}